\theoremstyle{plain}
\newtheorem{theorem}{Theorem}[section]
\newtheorem{lemma}[theorem]{Lemma}
\theoremstyle{definition}
\theoremstyle{remark}
\newcommand\apj{\emph{The Astrophysical Journal}}%
\newcommand\apjl{\emph{The Astrophysical Journal Letters}}
\newcommand\aap{\emph{Astronomy and Astrophysics}}
\def\cstat{${C}$ statistic}
\def\apj{Astrophys. J.}
\def\cmin{$C_{\mathrm{min}}$}
\def\cstat{${C}$ statistic}
\newcommand\Var{\text{Var}}
\newcommand\E{\text{E}}
\begin{document}

\begin{frontmatter}
\title{Linear regression for Poisson count data: \\ A new semi--analytical method with applications to COVID-19 events}
\runtitle{Linear Poisson regression with application to COVID-19 data}

\begin{aug}
\author[A]{\fnms{Massimiliano} \snm{Bonamente}\ead[label=e1]{bonamem@uah.edu}},
\address[A]{Department of Physics and Astronomy,
University of Alabama in Huntsville, Huntsville AL 35899 (U.S.A), \printead{e1}}

\end{aug}

\begin{abstract}
This paper presents the application
  of a new semi---analytical method of linear regression for Poisson count data
  to  COVID-19 events. The regression is based on the Bonamente and Spence (2022) maximum--likelihood
   solution for the best--fit parameters, and this paper introduces a simple analytical solution for the
covariance matrix that completes the problem of linear regression with Poisson data.

  The analytical nature for both parameter estimates and their covariance matrix  is made possible
  by a convenient factorization of the linear model proposed by J. Scargle (2013).
  The method makes use of the asymptotic properties of the
  Fisher information matrix, whose inverse provides the covariance matrix.
  The combination of
  simple analytical methods to obtain both the maximum--likelihood estimates
  of the parameters, and their covariance matrix, constitute a new
  and convenient method for the linear regression of Poisson--distributed count data,
  which are of common occurrence across a variety of fields.

  A comparison between this new linear regression method and two
  alternative methods often used for the regression of count data --- the
  ordinary least--square regression and the $\chi^2$ regression -- is provided
  with the application of these methods to the analysis of recent COVID--19 count data.
The paper also discusses the relative advantages and disadvantages among these
 methods for the linear regression of Poisson count data.
\end{abstract}

\begin{keyword}
\kwd{Probability}
\kwd{Statistics}
\kwd{Maximum--likelihood}
\kwd{Ordinary least--squares}
\kwd{Poisson distribution}
\kwd{Cash statistic}
\kwd{Parameter estimation}
\end{keyword}

\end{frontmatter}

\section{Introduction}
Integer--count data based on the Poisson distribution are quite common in real--life applications,
from S.-D. Poisson's use of his name--sake distribution for the study of civil and criminal cases
\citep{poisson1837}, Lord Rutherford's  characterization of
the rate of decay of radioactive isotopes \citep{rutherford1920} and
R.D. Clarke's application to the number of
bombs that fell in London during World~War~II \citep{clarke1946}, to
the  distribution of
low--count political events  \citep{king1988}, such as the number of seats lost by the president's party in a
mid--term election \citep{campbell1985}, or the
study of  mutations in bacteria \citep{luria1943}.
The common occurrence of the Poisson distribution lies primarily with its association
to the Poisson process, whereby this distribution describes the probability of
occurrence of events --- e.g., in a period of time or for a spatial region ---
from a process with a fixed rate of occurrence \citep[e.g.,][]{ross2003}.
Among the data analysis methods for Poisson--based count data,
the linear regression plays a fundamental role, given that the
linear model is arguably the simplest model that enforces a correlation
between the indepedendent variables (or regressors) and the dependent measured variables.

The maximum--likelihood estimate of a two--parameter
linear model for integer--valued count data
can be accomplished with the use of the Poisson log--likelihood, also known
as the \emph{Cash} or \cstat\  \citep{cash1979}.
Despite its common occurrence, it has been elusive to find a simple analytical expression for the
maximum--likelihood estimates of the parameters for the Poisson regression
(see, e.g., Sect.~2.2 of \citealt{cameron2013}). This restriction has resulted in a
limited use of the maximum--likelihood method for the regression of count data.
 This is in contrast to the simple
analytical solutions that are available for the ordinary least--squares (OLS)
regression
for data with equal variances (i.e, {homoskedastic} data) or with
different variances (i.e, {heteroskedastic} data), and for the
regression with the $\chi^2$ statistic,
which is a maximum--likelihood
method for heteroskedastic data of common use for normally--distributed data \citep[e.g.][]{greenwood1996}.
Such simple analytical solutions, which also extend to the more general multiple linear regression
with more than two adjustable parameters (e.g., see \citealt{bonamente2022book} for a review),
are now a standard of practice for the linear regression. Their use is in fact so widespread that
it is not uncommon to see $\chi^2$--based methods of regression even for
Poisson count data
for which the method is not justified, or rather it is known to lead to known biases
that have been documented as early as
\cite{lewis1949}, and more recently by \cite{kelly2007} and \cite{bonamente2020}.

Motivated by the need to provide a simple maximum--likelihood method
for the linear regression of count data (sometimes also referred to as \emph{cross--section} data)
that explicitly accounts for the Poisson distribution of
the measurements,
in \cite{bonamente2022} we have reported the first semi--analytical solution for the
parameters of the linear regression with the Poisson--based \cstat.
In this paper we continue the investigation of the same problem,
and report a new analytical solution for the
covariance matrix of the two model parameters, based on the Fisher information matrix
\citep{fisher1922b}.
The combination of the new results presented in this paper and those provided
in \cite{bonamente2022} thus provide a complete treatment of the maximum--likelihood
linear regression with Poisson--distributed count data.
The availability of these new analytical methods for  the linear regression of
count data therefore represents a key step towards a
wider use of an unbiased method of regression for such
integer--count data, without the need to resort to either numerical solutions
or the use of alternative and less accurate methods, such as those based on least--squares
or the $\chi^2$ statistic.

There are alternatives to the modelling of count data with the simple one--parameter Poisson
distribution considered in this paper. One of the key limitations of the Poisson regression is the inability
to properly model \emph{overdispersed} data (see, e.g., discussion in Chapter~3 of \cite{cameron2013}).
For such overdispersed data, other data models such as the negative binomial distribution can be used
\citep{hilbe2014}. Alternatively, the standard Poisson maximum--likelihood regression
can be modified to a \emph{quasi}--MLE regression with empirical variance functions that reflect
the overdispersion \citep[e.g.][]{cameron1986}, also used in the
generalized linear model (GLM) literature \citep[e.g.][]{mccullagh1989}. Nonetheless,
the simple linear regression of Poisson count data is an essential tool for the data analyst, and the
development of simple analytical tools for its application is the subject of this paper.

The paper is structured as follows:
Sect.~\ref{sec:model} summarizes the model for the regression with count data, which
is described in full in \cite{bonamente2022};
Sect.~\ref{sec:cov} describes methods to obtain the covariance matrix,
with Sect.~\ref{sec:properties} describing the properties of the matrix for the Poisson
linear regression under consideration and
Sect.~\ref{sec:errorProp} presenting an alternative method
to estimate the covariance matrix based on the error propagation method.
Sect.~\ref{sec:application} provides an application of the results
to recent COVID--19 data with comparison to the least--squares and
$\chi^2$ methods. Finally,  Sect.~\ref{sec:conclusions} contains a
discussion and the conclusions.

\section{Model for the regression of count data with the \cstat}
\label{sec:model}
The data under consideration are in the form of independent pairs of measurements
$(x_i, y_i)$, where $y_i$ are Poisson--distributed variables, for $i=1,\dots,N$, and
$x_i$ are fixed values of the independent variable.
\subsection{The \cite{scargle2013} parameterization of the linear model}
A convenient parameterization
of the linear model was proposed by \cite{scargle2013}, and it is in the form
of
\begin{equation}
f_S(x)=\lambda (1+a(x-x_A)),
\label{eq:scargle}
\end{equation}
where $x_A$ is a constant that usually coincides with the beginning of the
range of the independent variable $x$, and $\lambda$ and $a$ are the two parameters.
This parameterization is thus equivalent to a linear model with overall slope $\lambda\, a$,
and intercept $\lambda(1-a\,x_A)$, yet this factorization has algebraic andvantages when
used in the Poisson log--likelihood. The parent mean of the Poisson distribution
of the variable $y_i$
is then
\begin{equation}
  \mu_i= f_S(x_i) \, \Delta x_i,
  \label{eq:mui}
\end{equation}
  with $\Delta x_i$ the width of the $i$--th bin of data,
which is not required to be uniform among the measurements.

\subsection{Maximum--likelihood regression}
\label{sec:MLregression}
Upon minimization of the logarithm of the Poisson likelihood of the data with model,
in \cite{bonamente2022} we have shown that
the method of maximum likelihood yields two simple equations for the determination
of the parameter estimates. The first is a analytical relationship between
the two parameters,

\begin{equation}
  \lambda(a) = \frac{M}{R\left(1+a \dfrac{R}{2}\right)}
\label{eq:lambda}
\end{equation}
with $M$ the sum of all integer counts $y_i$, and $R$ the range of the independent
variable, typically chosed to be $R=x_B-x_A$.
This equations enables the determination of $\hat{\lambda}$ from $\hat{a}$. The second is
an equation that must be solved to obtain $\hat{a}$, and it is conveniently cast
as the zero of a function $F(a)$,
\begin{equation}
  F(a) = 1+\dfrac{R}{2} \left(a -\dfrac{M}{g(a)} \right) = 0,\, \text{ with }
  g(a) = \sum_{i=1}^N y_i \dfrac{(x_i-x_A)}{1+a(x_i-x_A)}.
\label{eq:Fa}
\end{equation}
In general, solution of \eqref{eq:Fa} to obtain the estimate $\hat{a}$ can be obtained numerically.
The inherently analytical method leading to \eqref{eq:lambda} and \eqref{eq:Fa},
and the need for a numerical solution of the latter, constitute this new \emph{semi--analytical}
method for the linear regression of Poisson count data.
The properties of the two functions $F(a)$ and $g(a)$ are described in detail in \cite{bonamente2022},
and the key properties are reported in the following.

\subsection{Identification of points of singularity and zeros of $g(a)$}

It is immediate to identify the $n$ points of singularity of $g(a)$ and its zeros,
 using
the property that $g'(a)<0$ between points of singularity.
Since it is also true that $F'(a)<0$ between its points of singularity, which are the zeros of $g(a)$,
the $n-1$ zeros of $F(a)$ can be easily found because of
the continuity of the function between singularities. In particular, the \emph{acceptable}
solution --- defined as the pair of best--fit parameters $(\hat{a}, \hat{\lambda})$
that gives a non--negative
model throughout its support, as required by the counting nature of the Poisson distribution ---
is either the first or the last zero, according to the
asymptotic
value of the function $F(a)$  for $a \to \pm \infty$.
The acceptable solution $\hat{a}$ can therefore be found with elementary numerical methods
from the equation $F(a)=0$.

\subsection{Applicability of model to data with gaps}
Both equations \eqref{eq:lambda} and \eqref{eq:Fa}
apply to data with arbitrary bin size. Moreover, when there are gaps in the data, defined as intervals of the
$x$ variable without measurements, the two equations are  modified with the introduction of
a modified range $R_m$ defined as
\begin{equation}
  R_m=\dfrac{R^2 - 2 S_G}{R-R_g}
  \label{eq:Rm}
\end{equation}
where $R_G$ is the sum of the ranges $R_{G,j}$ of all the gaps, and $S_G$ is a constant defined
by
\[ S_G = \sum_{j=1}^g R_{G,j}(x_{G,j}-x_A) \]
with $x_{G,j}$ the midpoint of the $j$--th gap.
The equations are modified to~\footnote{This modification is justified in  Lemma~6.7
of \cite{bonamente2022}.}
\begin{equation} \begin{cases}
  F(a) = 1 + \dfrac{a R_m}{2} - \dfrac{M R_m}{2 g(a)} = 0\\[10pt]
  \lambda (a) = \dfrac{M}{R\left(1+a \dfrac{R}{2}\right) -(R_G+a S_G)}
\end{cases}
\label{eq:gaps}
\end{equation}
Such modifications for the presence of gaps in the data do not
introduce any additional statistical complication to the analysis of either the
maximum--likelihood estimates
or the covariance matrix. Gaps in the data are quite common in data analysis,
for example when a range of wavelengths in the spectrum of an astronomical
source is unavailable \citep[e.g., because
of detector inefficiencies, as in ][]{ahoranta2020},
or when a time interval for the light--curve of a source is unobserved \citep[e.g.][]{levine1996}.

The reason for the parameterization of the linear model according to \eqref{eq:scargle} is that
the two resulting equations \eqref{eq:lambda} and \eqref{eq:Fa} can be used to identify
the maximum--likelihood parameter estimates more easily than for the standard parameterization
$a+b\,x$,
which leads to two coupled non--linear equations instead
(as noted, e.g., in Ch.~16 of \citealt{bonamente2022book}). Moreover, as will be shown
in the next section,
these equations lead to a very convenient analytical solution for the covariance matrix,
which represents the main new result of this paper.

\section{Evaluation of the covariance matrix}
\label{sec:cov}

\subsection{General properties of the covariance matrix}
The maximum--likelihood estimators
of the linear model parameters $\hat{\theta}=(\hat{\lambda},\hat{a})$
from Poisson count data
are  normally distributed, but only in the asymptotic limit of a large number of measurements $N$
(see, e.g., \citealt{eadie1971, cameron2013}).
In this asymptotic limit, the maximum--likelihood estimators are
also unbiased and efficient, with their variance reaching the \emph{Cram\'{e}r--Rao}
lower bound (see, e.g., \citealt{rao1945,cramer1946,amemiya1985}).
Within this limit, it is thus possible to approximate the covariance or error matrix of the parameters,
defined by
\[ \varepsilon  = \E [ (\hat{\theta}-\theta_0) (\hat{\theta}-\theta_0)^T],\]
as the inverse of the \emph{Fisher information matrix},
\begin{equation}
  \hat{\varepsilon} = - \left( \E\left[ \dfrac{\partial^2 \ln \mathscr{L}}{\partial \theta\, \partial \theta'}
  \right] \right)^{-1}_{\theta=\theta_0} = -\left( \E\left[ \sum_{i=1}^N \dfrac{\partial^2 \ln f(x_i)}{\partial \theta\, \partial \theta'}
  \right] \right)^{-1}_{\theta=\theta_0}.
  \label{eq:FisherInfo}
\end{equation}
In this application, $f(x_i)$ is the Poisson distribution for the measurement
at a value $x_i$ of the indepedendent variable, with a $\mu_i$ parameter
given by \eqref{eq:mui}, and the expectation in \eqref{eq:FisherInfo}
is taken with respect to  this distribution.~\footnote{The notation
$\partial^2 \ln \mathscr{L}/{\partial \theta\, \partial \theta'}$
indicates that diagonal elements are second derivatives with respect to the
same parameter, and off--diagonal
elements have cross--derivatives with respect to two different parameters. The symbol $^T$ indicates
the transpose.}
In practice, \eqref{eq:FisherInfo} means that the
maximum--likelihood estimates of the parameters of the
linear model converge in distribution to a normal distribution with the covariance matrix
given by \eqref{eq:FisherInfo}, or
\[
  (\hat{\theta}-\theta_0) \overset{d}\to N(0,\hat{\varepsilon}).
\]
R.~A. Fisher was the first to associate the concept of \emph{information} with  the
second derivative --- or curvature ---  of the  logarithm of the likelihood \cite{fisher1922b},
and similar definitions of information have been derived from this idea (e.g.,
\cite{shannon1949,kullback1951,akaike1974}).
Given that the true parameter values $\theta_0$ are unknown, the information
matrix is evaluated at $\theta=\hat{\theta}$, in recognition of the fact that the
maximum--likelihood estimates are unbiased and thus asymptotically converge to
the true--yet--unknown values (e.g., \citealt{fisher1922b}, Chapter~IX of \citealt{fisher1925},
or Chapter~33 of \citealt{cramer1946}).

\subsection{Evaluation of the covariance matrix for the Poisson linear regression with the Scargle et al. (2013) parameterization}
The negative of the logarithm of the likelihood, which is related to the \cstat\
by $C = -2 \ln \mathscr{L} + \text{const}$,  can be written as
\begin{equation}
 - \ln \mathscr{L} = \lambda (R-R_G) + \lambda a S_1 - M \ln \lambda - \sum_{i=1}^N y_i \ln (1+a(x_i-x_A)),
\end{equation}
where the constant $S_1$ is defined by
\begin{equation}
  S_1=\sum_{i=1}^N (x_i-x_A) \Delta x_i = \dfrac{R^2 - 2 S_G}{2},
  \label{eq:S1}
\end{equation}
and the second  equation connects $S_1$ with the constants defined in Sect.~\ref{sec:model},
and it is an immediate consequence of the model linearity.
The first derivatives are
\begin{equation}
  \begin{cases}
  -  \dfrac{\partial \ln \mathscr{L}}{\partial a} = \lambda S_1 - g(a)\\[10pt]
  -  \dfrac{\partial \ln \mathscr{L}}{\partial \lambda} = (R-R_G) +a S_1 - \dfrac{M}{\lambda}
  \end{cases}
\end{equation}
which, when set to zero, yield the usual maximum--likelihood estimates
described in Sect.~\ref{sec:model} and in more detail in \cite{bonamente2022}. From these, it is immediate to
evaluate the second order derivatives and thus obtain
\begin{equation}
  - \dfrac{\partial^2 \ln \mathscr{L}}{\partial \theta\, \partial \theta'} = \begin{bmatrix}
    \dfrac{M}{\lambda^2} & S_1 \\[10pt]
    S_1   & G(a)
  \end{bmatrix}
  \label{eq:matrix1}
\end{equation}
with a new function $G$ defined as
\begin{equation}
  G(a) = - \dfrac{ \partial g(a)}{\partial a} = \sum_{i=1}^N \dfrac{y_i (x_i-x_A)^2}{(1+a(x_i-x_A))^2}.
  \label{eq:Ga}
\end{equation}
The expectation of \eqref{eq:matrix1} is immediately calculated with the consideration that
\[ \left.\E[y_i]\right|_{\hat{\theta}} = \left.\mu_i\right|_{\hat{\theta}} = \hat{\lambda} (1+\hat{a}(x_i-x_A)) \Delta x_i,
\]
where $\Delta x_i$ is the size of the $i$--th bin of data.
Given that all terms in \eqref{eq:matrix1} are linear in the measurements $y_i$, the expectation is found
as
\begin{equation}
   - \left. \E\left[\dfrac{\partial^2 \ln \mathscr{L}}{\partial \theta\, \partial \theta'} \right]
  \right|_{\hat{\theta}} =
  \begin{bmatrix}  \dfrac{(R-R_G) +\hat{a} S_1}{\hat{\lambda}} & S_1 \\[10pt]
    S_1 & \hat{\lambda}\, H(\hat{a})
  \end{bmatrix},
  \label{eq:matrix2}
\end{equation}
with
\begin{equation}
  H(a) = \sum_{i=1}^N \dfrac{(x_i-x_A)^2}{1 + {a}(x_i-x_A)} \Delta x_i
  \label{eq:Ha}
\end{equation}
defined as a convenient analytical function to describe one of the terms of the information
matrix. Notice how the bin size appears explicitly in this function, unlike
in the case of the function $g(a)$ in \eqref{eq:Fa}, which is independent of the bin size.

With the simple analytical expression available for the second--order derivatives
of the logarithm of the likelihood \eqref{eq:matrix2}, it is finally possible
to provide the asymptotic estimate of the covariance matrix as its inverse,
\begin{equation}
  \hat{\varepsilon} =
  \begin{bmatrix} \hat{\sigma}^2_a & \hat{\sigma}^2_{a\,\lambda} \\[5pt]
    \hat{\sigma}^2_{a\,\lambda} & \hat{\sigma}^2_{\lambda}
\end{bmatrix} =
  \dfrac{1}{\Delta} \begin{bmatrix}
    \hat{\lambda}\, H(\hat{a}) & -S_1 \\[10pt]
    -S_1 & \dfrac{(R-R_G) + \hat{a} S_1}{\hat{\lambda}}
  \end{bmatrix},
  \label{eq:matrix3}
\end{equation}
with $\Delta$ the determinant of the matrix \eqref{eq:matrix2},
\[ \Delta = \left((R-R_G) + \hat{a} S_1\right) H(\hat{a}) - S_1^2.
\]
This simple analytical form for the covariance matrix of the maximum--likelihood
estimates for the linear regression to Poisson data is the main result of this
paper. Its properties are analyzed in the following section.

\subsection{Properties of the  covariance matrix}
\label{sec:properties}
Basic results on the sign of the terms in the covariance matrix \eqref{eq:matrix3}
are provided in this section. First, it is necessary to recall the definition of
\emph{acceptability} of the estimates of the model parameters as values that
ensure a non--negative parent Poisson mean, which was briefly introduced in Sect.~\ref{sec:model}.
Acceptable solutions for $\hat{a}$ and
$\hat{\lambda}$ require that the quantities
\[ \hat{\lambda}(1+\hat{a}(x_i-x_A))\]
are all non--negative. This results in values of the two parameters to be found in two
possible ranges, namely (a) for $\hat{a} \leq -2/\Delta x_1$, where $\hat{\lambda}<0$, or
(b) for $\hat{a}>-1/(R-\Delta x_N/2)$, where $\hat{\lambda}>0$ (see Lemma~5.2
of \citealt{bonamente2022} for a proof). In the following, it is assumed that $x_A$ is
the beginning of the range of the independent variable $x$.

\begin{lemma}[Sign of the determinant $\Delta$]
  The determinant $\Delta$ in \eqref{eq:matrix3} is  positive definite for all acceptable solutions
  of the regression coefficients.
\end{lemma}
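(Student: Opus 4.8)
The plan is to recognize the determinant $\Delta$ as the Gram determinant of a weighted inner product and to obtain its positivity from the Cauchy--Schwarz inequality, with the acceptability hypothesis supplying exactly the positivity (or definite sign) of the relevant weights. First I would introduce the shorthand $u_i = x_i - x_A$, $w_i = \Delta x_i > 0$, and $v_i = 1 + \hat{a}\,u_i$, and rewrite the three independent entries of the information matrix \eqref{eq:matrix2}. Using the identity $R - R_G = \sum_{i} w_i$ (the bins tile the non--gap support) together with the definition \eqref{eq:S1} of $S_1 = \sum_i u_i w_i$, one has
\[ (R-R_G) + \hat{a}\,S_1 = \sum_{i=1}^N v_i\, w_i, \qquad H(\hat{a}) = \sum_{i=1}^N \frac{u_i^2}{v_i}\, w_i, \qquad S_1 = \sum_{i=1}^N u_i\, w_i, \]
so that $\Delta = \bigl(\sum_i v_i w_i\bigr)\bigl(\sum_i u_i^2 w_i / v_i\bigr) - \bigl(\sum_i u_i w_i\bigr)^2$.

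The second step is to invoke acceptability. By the two ranges recalled before the lemma --- case (a) $\hat{a}\le -2/\Delta x_1$ with $\hat{\lambda}<0$, and case (b) $\hat{a} > -1/(R-\Delta x_N/2)$ with $\hat{\lambda}>0$ --- the quantities $v_i = 1 + \hat{a}(x_i-x_A)$ all carry a common sign: $v_i > 0$ for every $i$ in case (b) and $v_i < 0$ for every $i$ in case (a). This is precisely what makes the factorization legitimate: setting $\alpha_i = \sqrt{|v_i|}$ and $\beta_i = u_i/\sqrt{|v_i|}$ and using the positive weights $w_i$, I would observe that the two factors $\sum_i v_i w_i$ and $\sum_i u_i^2 w_i/v_i$ share the common sign of the $v_i$, so that their product equals $\bigl(\sum_i \alpha_i^2 w_i\bigr)\bigl(\sum_i \beta_i^2 w_i\bigr)$ and is non--negative, while the cross term reproduces $\sum_i \alpha_i \beta_i w_i = \sum_i u_i w_i = S_1$ with no spurious absolute value.

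The third step is the Cauchy--Schwarz inequality for the weighted inner product $\langle f, h\rangle_w = \sum_i f_i h_i w_i$, which gives $\bigl(\sum_i \alpha_i\beta_i w_i\bigr)^2 \le \bigl(\sum_i \alpha_i^2 w_i\bigr)\bigl(\sum_i \beta_i^2 w_i\bigr)$, i.e. $S_1^2 \le \bigl(\sum_i v_i w_i\bigr)\bigl(\sum_i u_i^2 w_i/v_i\bigr)$, hence $\Delta \ge 0$. To upgrade to strict positivity I would analyze the equality case: equality forces $\beta_i = c\,\alpha_i$ for a single constant $c$ and all $i$, that is $u_i = c\,v_i = c\,(1+\hat{a}u_i)$, which after rearrangement requires all the $u_i = x_i - x_A$ to take a single common value. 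Since a genuine regression dataset contains at least two distinct abscissae, this degenerate configuration is excluded and the inequality is strict, giving $\Delta > 0$.

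I expect the main obstacle to be the sign bookkeeping in case (a): there both $\sum_i v_i w_i$ and $\sum_i u_i^2 w_i/v_i$ are negative, and one must verify that the two sign flips cancel in the product while the cross term still equals $S_1$. A cleaner but less explicit alternative, worth recording, is that the matrix \eqref{eq:matrix2} is a Fisher information matrix, hence positive semi--definite as the expectation of the outer product of the score, so its determinant is automatically non--negative and strictly positive whenever $(\hat{\lambda},\hat{a})$ is identifiable, i.e. whenever the $x_i$ are not all equal. I would nonetheless prefer the Cauchy--Schwarz route in the write--up, since it displays explicitly where the acceptability condition is needed to keep the weights $|v_i|$ well defined.
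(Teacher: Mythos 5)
Your proof is correct and follows essentially the same route as the paper's: the same factorization of $\Delta$ into a product of sums minus a squared cross term, the same use of the acceptability condition to give all the terms $1+\hat{a}(x_i-x_A)$ a common sign, and the same appeal to the Cauchy--Schwarz inequality. The only addition is your analysis of the equality case to upgrade $\Delta \geq 0$ to strict positivity when the $x_i$ are not all equal, a point the paper's proof leaves implicit despite the lemma asserting positive definiteness.
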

\begin{proof}
The determinant can be written as
  \[ \Delta = \sum_{i=1}^N \dfrac{(x_i-x_A)^2}{1+\hat{a}(x_i-x_A)} \Delta x_i \cdot
  \sum_{i=1}^N  (1+\hat{a}(x_i-x_A))  \Delta x_i -
  \left( \sum_{i=1}^N  (x_i-x_A) \Delta x_i \right)^2.
  \]
This is in the form
  \[ \Delta = \sum_{i=1}^N \dfrac{r_i^2}{s_i} \cdot\sum_{i=1}^N s_i - \left( \sum_{i=1}^N r_i \right)^2
  \]
  with $r_i=(x_i-x_A) \Delta x_i$ and $s_i=(1+\hat{a}(x_i-x_A)) \Delta x_i$. Notice that
the terms $s_i$ are either all positive or all negative, therefore making it such that
one can use $|s_i|$ in their place in the previous equation,
  \[ \Delta = \sum_{i=1}^N \dfrac{r_i^2}{|s_i|} \cdot\sum_{i=1}^N |s_i| - \left( \sum_{i=1}^N r_i \right)^2.
  \]
  With the substitution $a_i=r_i/\sqrt{|s_i|}$ and $b_i=\sqrt{|s_i|}$, the determinant becomes
  \[
   \Delta = \sum_{i=1}^N a_i^2 \cdot \sum_{i=1}^N b_i^2 - \left( \sum_{i=1}^N a_i b_i\right)^2 \geq 0,
 \]
 according to the Cauchy--Schwartz inequality.
\end{proof}

It is also possible to show that the variances of $\hat{a}$ and $\hat{\lambda}$ have the proper sign,
and that the covariance between them is negative.
\begin{lemma}[Signs of variance and covariance terms] The diagonal terms in the covariance matrix
  \eqref{eq:matrix3} are positive, and the off--diagonal terms are negative, for all acceptable solutions
  of the linear regression coefficients.
\end{lemma}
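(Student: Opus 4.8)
The plan is to leverage the preceding lemma, which guarantees $\Delta>0$ for every acceptable solution, so that the sign of each entry of $\hat\varepsilon$ in \eqref{eq:matrix3} is dictated entirely by the sign of the corresponding entry of the prefactor matrix. This reduces the claim to three sign statements: $\hat\lambda\,H(\hat a)\ge 0$ for the $(1,1)$ entry, $\bigl((R-R_G)+\hat a S_1\bigr)/\hat\lambda\ge 0$ for the $(2,2)$ entry, and $S_1>0$ for the off--diagonal entry.

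First I would treat the diagonal entry $\hat\sigma^2_a=\hat\lambda H(\hat a)/\Delta$. Writing $H$ from \eqref{eq:Ha} and multiplying numerator and denominator of each summand by $\hat\lambda$ gives
\[
  \hat\lambda\,H(\hat a)=\sum_{i=1}^N \frac{\hat\lambda^2\,(x_i-x_A)^2}{\hat\lambda\bigl(1+\hat a(x_i-x_A)\bigr)}\,\Delta x_i .
\]
Each summand now has a non--negative numerator and, by the acceptability condition $\hat\lambda(1+\hat a(x_i-x_A))\ge 0$ recalled just before the lemma, a non--negative denominator; hence the whole sum is non--negative regardless of which of the two acceptable ranges the pair $(\hat a,\hat\lambda)$ occupies. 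This is the step where case (a) (with $\hat\lambda<0$ and all $1+\hat a(x_i-x_A)\le 0$) and case (b) (with $\hat\lambda>0$ and all $1+\hat a(x_i-x_A)\ge 0$) are disposed of uniformly, and I expect this factorization trick to be the only genuinely delicate point of the argument, since analyzing $H(\hat a)$ and $\hat\lambda$ separately would force an awkward two--case split.

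Next I would handle $\hat\sigma^2_\lambda$, whose sign is that of $\bigl((R-R_G)+\hat a S_1\bigr)/\hat\lambda$. Rather than bound this expression directly, I would exploit the maximum--likelihood condition obtained by setting the second of the first--derivative equations to zero, which yields the identity $(R-R_G)+\hat a S_1=M/\hat\lambda$. Substituting this shows that the $(2,2)$ entry of \eqref{eq:matrix2} is simply $M/\hat\lambda^2$, so that $\hat\sigma^2_\lambda=M/(\hat\lambda^2\Delta)\ge 0$, since $M=\sum_i y_i$ is a sum of counts. Finally, for the off--diagonal term $\hat\sigma^2_{a\lambda}=-S_1/\Delta$, I would invoke the standing assumption that $x_A$ is the left endpoint of the range, so that $x_i-x_A\ge 0$ for every $i$ and therefore $S_1=\sum_i (x_i-x_A)\Delta x_i>0$; combined with $\Delta>0$ this makes the covariance strictly negative. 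The three sign computations together establish the lemma, with strictness of the diagonal entries holding for any non--degenerate dataset (some $x_i>x_A$ and $M>0$).
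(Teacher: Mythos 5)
Your proof is correct, and for two of the three sign claims it coincides with the paper's own argument: both reduce everything to the signs of the cofactor entries via the previous lemma's $\Delta>0$; both dispose of $\hat{\sigma}^2_a$ by observing that acceptability forces $\hat{\lambda}$ and every factor $1+\hat{a}(x_i-x_A)$ to share a sign (your multiplication of each summand by $\hat{\lambda}/\hat{\lambda}$ is just a repackaging of that observation); and both obtain the negative covariance from $S_1>0$. The one genuine divergence is the $(2,2)$ entry: the paper expands $(R-R_G)+\hat{a}S_1=\sum_{i=1}^N(1+\hat{a}(x_i-x_A))\,\Delta x_i$ and reuses the same-sign argument, whereas you invoke the stationarity condition $\partial\ln\mathscr{L}/\partial\lambda=0$ to rewrite the entry as $M/(\hat{\lambda}^2\Delta)$. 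Your identity is slightly cleaner, since it needs only $M\geq 0$ and nothing about acceptability for that entry, but it holds only at the maximum--likelihood point where the first--order condition is satisfied, while the paper's version gives the sign at any acceptable parameter pair; both are legitimate here because the covariance matrix is indeed evaluated at $\hat{\theta}$. Your closing remark that strict positivity requires a non--degenerate dataset (some $x_i>x_A$ and $M>0$) is a point the paper glosses over.
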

\begin{proof}
  The term $S_1$ is positive according to its definition \eqref{eq:S1}, and therefore
  the covariance is always negative.

According to \eqref{eq:Ha}, the variance of the estimate $\hat{a}$ is
  \[ \hat{\sigma}^2_a = \dfrac{\hat{\lambda} H(\hat{a})}\Delta
  =\dfrac{\hat{\lambda}}{\Delta}  \sum_{i=1}^N \dfrac{(x_i-x_A)^2}{1 + {a}(x_i-x_A)} \Delta x_i. \]
  Given the discussion at the beginning of this section, $\hat{\lambda}$ and each of the terms
  $1+\hat{a}(x_i-x_A)$ must have the same sign for acceptable values of the regression
  coefficients. The variance is therefore always positive, as it should.

  For the same reasons, the variance of $\hat{\lambda}$ is immediately shown to have the proper sign,
  \[ \hat{\sigma}^2_{\lambda} = \dfrac{1}{\Delta \hat{\lambda}} \left( (R-R_G) + \hat{a} S_1\right)
  = \dfrac{1}{\Delta} \dfrac{\sum_{i=1}^N (1+\hat{a}(x_i-x_A)) \Delta x_i}{\hat{\lambda}} \geq 0.
  \]

  Finally, the negative sign of the covariance term is an immediate consequence of the definition of $S_1$
  from \eqref{eq:S1}.
\end{proof}

Al alternative approach to achieve a non--negative Poisson mean for all values of the model parameters
would be to perform an exponential transform,
so that the expectations for the Poisson--distributed mesurements is, for example, of the type
$\mu_i = \exp \left({a+b\,x_i}\right)$ (see, e.g. Sect.~2.2 of \citealt{cameron2013}).
In practice, this corresponds to a linear model for the logarithm of the means,
$\log \mu_i = a + b\, x_i$.
Although this transformation does indeed provide a Poisson mean $\mu_i$ that is always non--negative,
the model is no longer \emph{linear} between $x$ and  $y$, and therefore unsuitable for those cases where
a linear relationship is required, e.g., because of the physical nature of the two variables.
Moreover, even with this exponential transformation, the maximum--likelihood method of regression
for Poisson data does not have an analytical solution with the usual parameterization of the linear
model \cite{cameron2013}.

\subsection{An alternative approximation for the covariance matrix}
\label{sec:errorProp}

A different analytical approximation for the covariance matrix of the
 coefficients of the linear regression can be given using the \emph{error propagation}
 method, also known as the \emph{delta} method \citep[e.g.][]{bevington2003, cameron2013}.
 The method consists of treating the parameters as a function of the
 independent measurements, $\theta_j=\theta_j(y_k)$,
 and approximating the variances and covariances with the first--order terms of their
 Taylor series,
 \begin{equation}
   \hat{\sigma}^2_{\theta_i\,\theta_j} \simeq \sum_{k=1}^N \left. \left(\dfrac{\partial \theta_i}{\partial y_k}\right)
   \right|_{\hat{\theta}}
   \left.\left(\dfrac{\partial \theta_j}{\partial y_k}\right) \right|_{\hat{\theta}} \sigma^2_k,
   \label{eq:errProp}
 \end{equation}
where $\sigma^2_k$ is an estimate of the variance of the measurement $y_k$.
The regression model described in Sect.~\ref{sec:model} does not lead to best--fit parameters $\hat{\theta}$
that are described as a function of the measurements in a closed form. Nonetheless,
it is possible to take the derivatives of both \eqref{eq:lambda} and \eqref{eq:Fa}, or more generally
\eqref{eq:gaps},
with respect to $y_j$ to obtain the derivatives $\partial \theta_i / \partial y_k$ as follows.

Start with
\[ \lambda R \left(1+a \dfrac{R}{2}\right) - \lambda(R_G+aS_G) = M
\]
which is the most general equation to obtain $\hat{a}$ in the presence of
gaps in the data and of non--uniform binning. The equation can be brought
in a simpler form as
\[ \lambda(R-R_G) + a \lambda\left( \dfrac{R^2}{2} - S_G \right) = \sum_{i=1}^N y_i
\]
leading to
\[ \dfrac{\partial \lambda}{\partial y_j} \left( R -R_G + a\left( \dfrac{R^2}{2} - S_G \right) \right)
+ \dfrac{\partial a}{\partial y_j} \lambda \left(\dfrac{R^2}{2} - S_G\right) = 1\]
and therefore
\begin{equation}
  \dfrac{\partial \lambda}{\partial y_j} = \dfrac{ 1 - \lambda\left(\dfrac{R^2}{2} - S_G\right)  \dfrac{\partial a}{\partial y_j}}
  { R -R_G + a\left( \dfrac{R^2}{2} - S_G \right)}.
\end{equation}
Alternatively, this partial derivative can be expressed in terms of $a$ alone as
\begin{equation}
  \dfrac{\partial \lambda}{\partial y_j} = \dfrac{R-R_G + \left(\dfrac{R^2}{2} - S_G\right)\left(a-M \dfrac{\partial a}{\partial y_j}\right)}{\left( R -R_G + a\left( \dfrac{R^2}{2} - S_G \right)\right)^2},
  \label{eq:dlambdady}
\end{equation}
where the correction for the presence of gaps in the data is in the constants $R_G$ and $S_G$, both of which become
null when the data spans a continous range between $x_A$ and $x_B$, with $R=x_B-x_A$.

A similar procedure yields the partial derivatives of $a$ from
\[ \left(a+\dfrac{2}{R_m}\right)g(a) = M, \]
which is the general expression corresponding to the equation $F(a)=0$ needed to find $\hat{a}$.
The equation can be written in a more convenient way as
\[ \sum_{i=1}^N \dfrac{ y_i \left(  2/R_m (x_i-x_A) -1\right)}{1 + a(x_i-x_A)} = 0\]
and from this, taking the derivative with respect to $y_j$,
\begin{equation}
  \dfrac{\partial a}{\partial y_j} = \dfrac{1- 2/R_m (x_j-x_A)}{1+a(x_j-x_A)} \cdot  \dfrac{1}{g_2(a)-2/R_m G(a)},
  \label{eq:dady}
\end{equation}
with the aid of a new function defined by
\begin{equation}
 g_2(a)=\sum_{i=1}^N \dfrac{y_i (x_i-x_A)}{(1+a(x_i-x_A))^2}.
  \label{eq:g2}
\end{equation}
The result is proven by noting that, in taking the total differential of the term in the sum for $i=j$,
there is an extra term due to the presence of the random variable $y_j$.

The partial derivatives in \eqref{eq:dlambdady} and \eqref{eq:dady}
can now be used in \eqref{eq:errProp} to provide an estimate of the
covariance matrix from the error propagation method. The variance  $\sigma^2_k$
of the measurement $y_k$ can be estimated as either the measurement $y_k$ itself, or
as the best--fit model $\hat{\mu}_k=\E[y_k/\hat{\theta}] = \hat{\lambda}(1+\hat{a}(x_k-x_A)) \Delta x_k$.
In the application provided below in Sect.~\ref{sec:application}, the parent
expectation $\hat{\mu}_k$ is used for the numerical result.

\section{Application to COVID--19 data}
\label{sec:application}
The daily number of deaths caused by the COVID--19 pandemic in the United States
provides an example
of count data that can be analyzed with the linear regression model presented in this paper.
The data are obtained from the \emph{New York Times} \citep{NYTCovid}, and
they are reported in Table~\ref{tab:covid}.~\footnote{The \emph{New York Times} archive
reports the cumulative number of deaths, Table~\ref{tab:covid} also reports the daily counts.}
These data follow the data model of Sect.~\ref{sec:model}, whereby the number of events
are collected in a time interval with a uniform size of length $\Delta x =~1$~day,
and the measurements in the column `Number of events' are independent of one another.
\begin{table}
  \centering
  \caption{Daily number of events (deaths) reported at the beginning
  of the pandemic. Data are from \cite{NYTCovid}. Day 0 is February 28, 2020,
  the day prior to the first reported death, which occurred on February 29, 2020.}
  \begin{tabular}{lcc}
  \hline
  \hline
    Day & Number of events & Cumulative\\
  \hline
    0 & 0  & 0\\
    1 & 1  & 1 \\
    2 & 2  & 3 \\
    3 & 3  & 6 \\
    4 & 4  & 10 \\
    5 & 2  & 12 \\
    6 & 0  & 12 \\
    7 & 3 & 15 \\
    8 & 4 & 19 \\
    9 & 3 & 22 \\
    10 & 4 & 26 \\
    11 & 5 & 31 \\
    12 & 6 & 37 \\
    13 & 6 & 43 \\
    14 & 7 & 50 \\
    15 & 10 & 60 \\
    16 & 8 & 68 \\
    17 & 23 & 91 \\
    18 & 26 & 117 \\
    19 & 45 & 162 \\
    \multicolumn{3}{c}{\dots}\\
    \hline
  \hline
\end{tabular}
\label{tab:covid}
\end{table}
This section presents two sets of linear regression to these data with the \cstat,
to illustrate the novel method of analysis
described in this paper and in \cite{bonamente2022}. It also presents the comparison
with two popular regression methods,
the ordinary least squares regression and the fit based on the $\chi^2$ statistic.
The comparison among these methods is used to illustrate the advantages of the new method
of linear regression for Poisson data, and to discuss assumptions and limitations of the different methods.

\begin{figure}[t]
  \centering
  \includegraphics[width=5in]{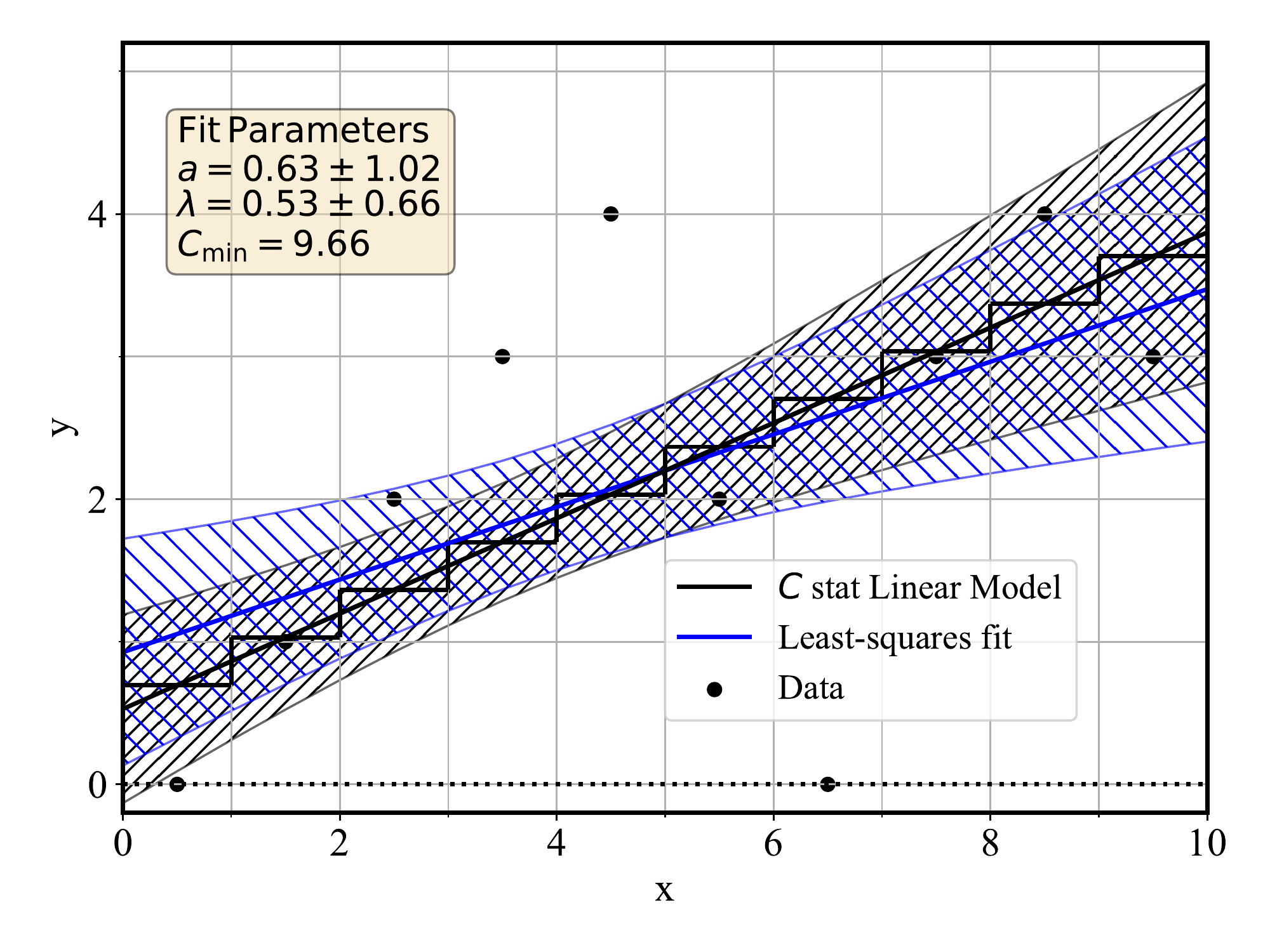}
  \caption{Linear regressions to days 0-9 of the COVID-19 data of Table~\ref{tab:covid}. The black curves correspond to the
  linear regression with the \cstat, and the blue curves for the ordinary least--squares regression.}
  \label{fig:fit1}
\end{figure}

\subsection{Fit to days 0--9 with uniform binning}
\label{sec:fitPoisson}
For this application, the data
for days 0 (day prior to first case recorded) through 9 from Table~\ref{tab:covid}
are fit to the linear model, with a total of $N=10$ data points.
\subsubsection{The maximum--likelihood parameter estimates}
The best--fit model according to \eqref{eq:scargle}
is shown in Figure~\ref{fig:fit1}
as the black line, with the step--wise continuous black line indicating the
best--fit model $\hat{\mu}_i$. The two lines intersect because the data are collected
over bins with unit size ($\Delta x=1$).

The method of maximum--likelihood regression (see Sect.~\ref{sec:model}) makes use of two functions
$g(a)$ and $F(a)$ that are reported in Figure~\ref{fig:gaFa}.
\begin{figure}
  \centering
  \includegraphics[width=5in]{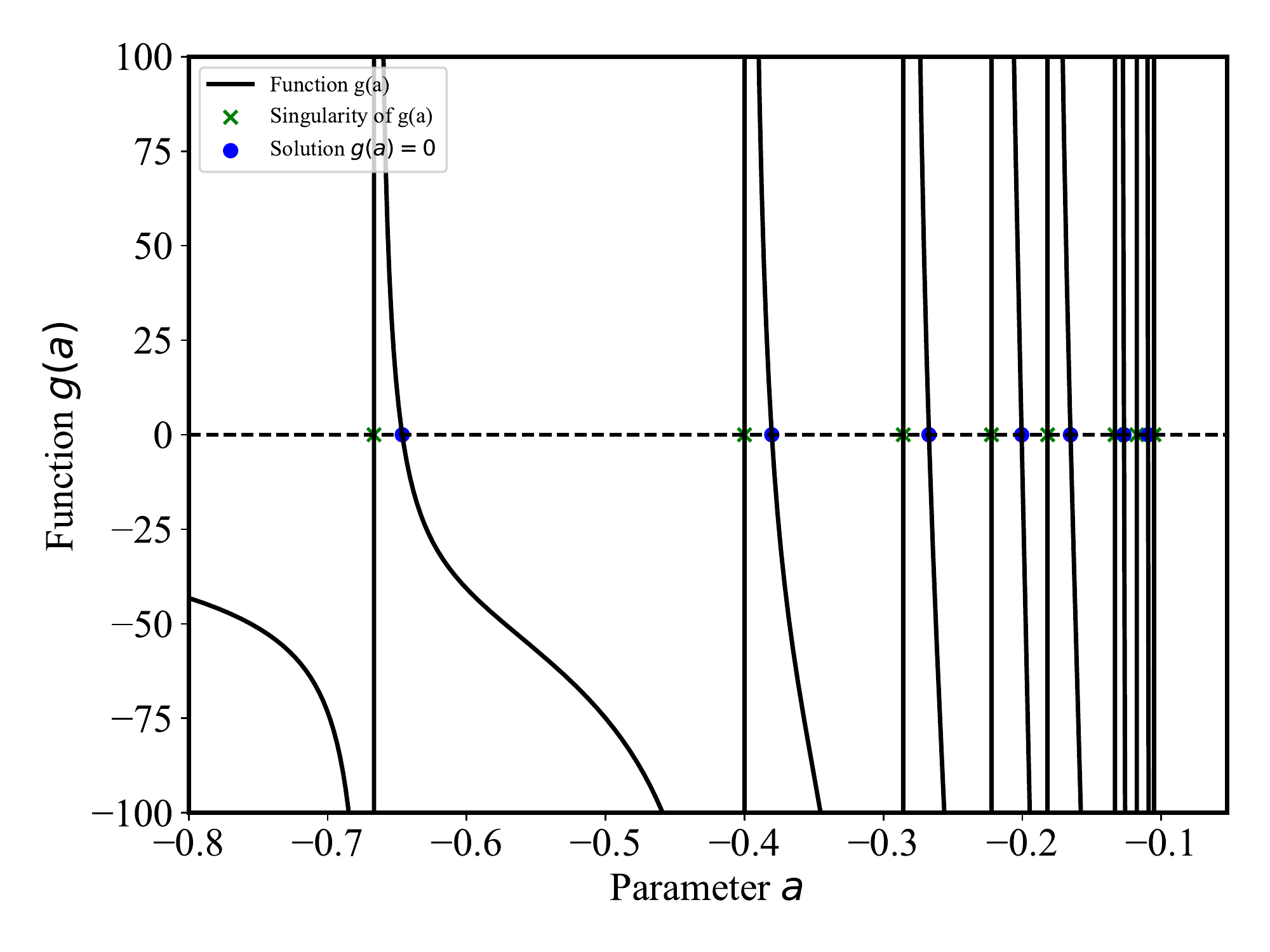}
  \includegraphics[width=5in]{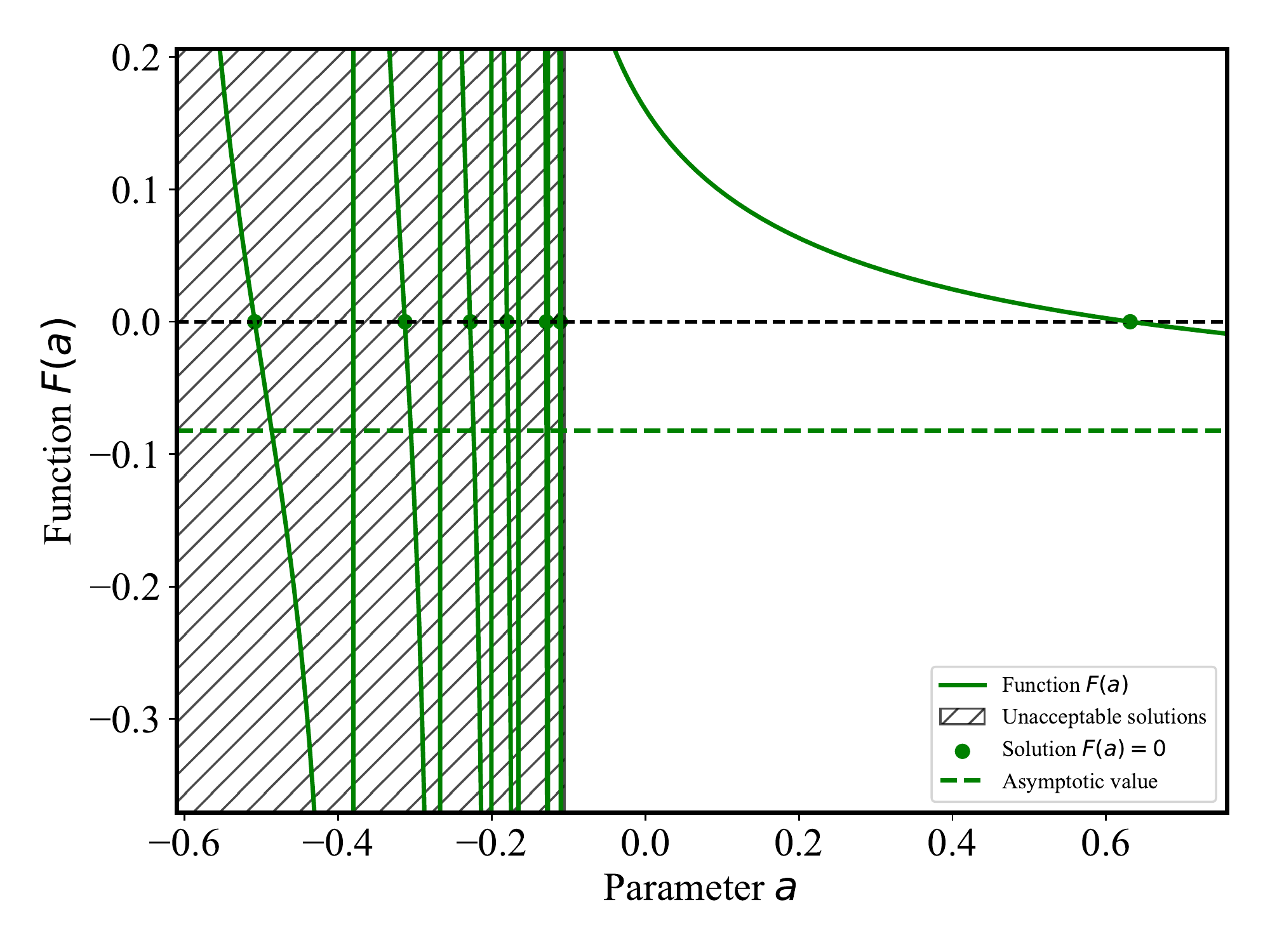}
  \caption{Functions $g(a)$ and $F(a)$ used for the Poisson regression, see \eqref{eq:Fa}.}
  \label{fig:gaFa}
\end{figure}
In this application, there is a total of $M=22$ counts for $N=10$ bins and $n=8$ unique non--zero bins.
According to \eqref{eq:Fa}, this results in $n=8$ points of singularity for $g(a)$ marked as
green crosses in Figure~\ref{fig:gaFa}, and $n-1=7$ roots or zeros of $g(a)$ marked as blue dots between consecutive
singularities, with $g'(a)<0$ in each interval. The roots of $g(a)$ are also points of singularity for $F(a)$, whose
zero(s) are the estimates of the parameter $a$. Of the $n-2=6$ such solutions, only the last one is \emph{acceptable},
i.e., it is the only value $\hat{a}$ that gives a non--negative model throughout its support, as required for
Poisson counts (see Sect.~5 of \citealt{bonamente2022}). Having identified the last singularity of $F(a)$ as the last zero of $g(a)$, the acceptable
solution $\hat{a}$ is marked as the rightmost green dot in Figure~\ref{fig:gaFa}.

It is clear that it is not necessary to identify and calculate all the roots of $g(a)$ in order to determine $\hat{a}$, but just the last one
between the last two singularities of $g(a)$. The computational burden is therefore very limited, even for
data with a large number of events. Figure~\ref{fig:gaFa} report all points of sngularities and the zeros for both $g(a)$ and $F(a)$
only for illustration purposes.

\subsubsection{The covariance matrix}
The error matrix evaluated according to
\eqref{eq:matrix3} is
\[
  \hat{\varepsilon}=\begin{bmatrix} 0.346 & -0.665 \\
     -0.665 & 1.044
  \end{bmatrix}
\]
with the estimated standard deviations of the parameters also reported
in the figure, for estimated parameters $\hat{a}=0.63\pm1.02$ and $\hat{\lambda}=0.53\pm0.66$.

\subsubsection{The log--likelihood or $C$ statistic}
\label{sec:cstatHypothesis}
Moreover, the fit statistic is evaluated according to
\begin{equation}
  C_{\mathrm{min}} = \sum_{i=1}^N (\hat{\mu}_i - y_i + y_i \ln \left(y_i/\hat{\mu}_i\right)),
\label{eq:cmin}
\end{equation}
as reported in Equation~3~of \cite{bonamente2022}.
Hypothesis testing with the \cstat\ requires the evaluation of
critical values of its distribution, which is known only in the
asymptotic limits of a large number of measurements $N$ and for large
parent means, when \cmin\ is approximately distributed
as $\chi^2(N-m)$ \citep[see, e.g.,][]{lampton1976, bonamente2022book},
with $m=2$ adjustable parameters.
Although a detailed discussion of the method of hypothesis testing with
the \cstat\ goes beyond the scope of this paper, it is nonetheless
useful to provide a brief description of its application for this example,
since hypothesis testing is an integral part of the overall regression.
In the present example of small
Poisson means, the approximation of \cmin\ with a $\chi^2(N-m)$ distribution
is not accurate, and in  \cite{bonamente2020} we proposed an approximate
method to estimate critical values that applies also to the low--mean case.
The method consists of
using the parent mean and variance of each term in \eqref{eq:cmin}
in place of the mean and variance of a $\chi^2(1)$ distribution
(which are respectively 1 and 2),
and use the central limit theorem to ensure an approximately
normal distribution for the statistic. Following this method,
the critical value of the \cmin\ distribution for $N-m=8$ degrees of freedom
is
13.6 (compared to a corresponding critical value of 13.4 for
the $\chi^2(8)$ distribution), and therefore these data are
consistent with the null hypothesis of a linear model, at the 90\% confidence level.
Notice how both the parameter estimation and the goodness--of--fit statistic
naturally account for measurements with zero counts, such as bins 0 and 6.

\subsubsection{The confidence band}
The grey hatched band associated with the best--fit
model highlights another strength of the availability of the
error matrix for the \cstat\ regression, which makes it is possible to estimate the variance of any function of the model parameters.
In particular, it is often useful to estimate the variance
of the \emph{expectation} of the $f_S(x)$ function for a fixed value of the independent variable $x$, i.e.,
\[ \hat{f}_S(x) = \E[f_S(x)/ x, \hat{\theta}] = \hat{\lambda} (1+\hat{a}(1-x)),\]
or, equivalently, for the expectation of the number of counts in a bin centered at $x$ and with
arbitrary bin size,
\[ \hat{y}_x=\E[y/ x, \hat{\theta}] = \hat{\lambda} (1+\hat{a}(1-x))\,\Delta x,
\]
where $x$ is any value of the regressor variable, not limited to the values $x_i$.
The variance of the variable $\hat{y}_x$ is represented by the variance of $y$ values along a vertical line, for a given $x$. The range
 $\hat{y}_x \pm \sigma_{y_x}$ is usually referred to as the \emph{confidence band} of the best--fit model,
 and is represented by the grey hatched region in Fig.~\ref{fig:fit1}. The variance $\sigma^2_{y_x}$
is obtained
by the error--propagation method, with
\[ \hat{\sigma}^2_{y_x} = \hat{\sigma}^2_{\lambda} \left.\left(\dfrac{\partial y_x}{\partial \lambda} \right)^2\right|_{\hat{\theta}} +
\hat{\sigma}^2_{a} \left.\left(\dfrac{\partial y_x}{\partial a} \right)^2\right|_{\hat{\theta}}
+  2\, \hat{\sigma}^2_{\lambda\,a}  \left.\left(\dfrac{\partial y_x}{\partial \lambda}\right)  \left(\dfrac{\partial y_x}{\partial a}  \right)\right|_{\hat{\theta}}.\]
The standard error $\sigma_{y_x}$  is therefore estimated immediately
from the error matrix $\hat{\varepsilon}$ in \eqref{eq:matrix3} and from the
analytical form for $y_x$.
Similar condiderations can be extended to any function of the model parameters. Another
useful application is to
the overall slope of the model, which is measured as $\hat{a}\cdot \hat{\lambda}=0.33 \pm 0.15$ in this example.

\subsection{Comparison to OLS regression}
\label{sec:OLS}
For comparison, Fig.~\ref{fig:fit1} also reports
the ordinary least--squares fit to the same data, with uniform weights for all the points,
with the linear model following the usual parameterization $f(x) = a + b x$.
\subsubsection{OLS estimators}
The OLS estimators are given by the usual formulas,
\begin{equation}
  \hat{\theta}_{\mathrm OLS} = \begin{bmatrix} \hat{a} \\[10pt] \hat{b} \end{bmatrix} = \dfrac{1}{\Delta}
  \begin{bmatrix} \sum_{i=1}^N x_i^2 \cdot \sum_{i=1}^Ny_i -  \sum_{i=1}^N x_i \cdot  \sum_{i=1}^N x_i\,y_i \\[10pt]
N \sum_{i=1}^N x_i \; y_i - \sum_{i=1}^N x_i  \cdot \sum_{i=1}^Ny_i
  \end{bmatrix}
\end{equation}
with
\[
\Delta = N \sum_{i=1}^N x_i^2 - \left(\sum_{i=1}^N x_i\right)^2.
\]
The OLS regression is known to provide an
 unbiased estimate of the parameters, regardless of the parent distribution
of the measurements, due to the linearity of the model \citep[see, e.g.][]{eadie1971,cameron2013}. As a result,
Fig.~\ref{fig:fit1} shows that the \cstat\ and the OLS linear regressions provide qualitatively similar
best--fit values.

\subsubsection{Estimate of covariance matrix for OLS}
Under the assumption of a normal distribution with equal variance for all the measurements,
the OLS method is equivalent to the maximum--likelihood method, and
 the value for the common
variance is required to estimate the covariance matrix \citep[e.g., see Chapter 11 of][]{bonamente2022book}.
For measurements that are not normally distributed, such as the data model presented in this paper and for this specific
data example,
it is still possible to provide an estimate of the parameter variances for the OLS estimators, \emph{if}
the OLS estimator are interpreted as a function of the measurements $y_i$, without
requiring homoskedasticity  or normality. In this case, the linear OLS is no longer a maximum--likelihood
estimator, but it retains the convenient property of unbiasedness.

Using the assumption that $y_i \sim \text{Poisson}(\theta_0+x_i \theta_1)$ and that the
measurements are independent,
it follows that $\Var(y_i)=\theta_0+x_i \theta_1$, and therefore
\begin{equation}
  \Var(\hat{\theta}_{\mathrm OLS}) \simeq \dfrac{1}{\Delta^2}
  \begin{bmatrix}  \sum_{i=1}^N \left( \sum_{i=1}^N x_i^2 - x_i \sum_{i=1}^N x_i\right)^2 \cdot (\theta_0+x_i \theta_1) \\[10pt]
    	 \sum_{i=1}^N \left( N x_i - \sum_{i=1}^N x_i\right)^2 \cdot (\theta_0+x_i \theta_1)
     \end{bmatrix}
     \label{eq:OLSPoissVar}
\end{equation}
Naturally, since the true parameters are unknown, one can only use these variances by substituting $\hat{\theta}$ for $\theta$
in the right--hand side of \eqref{eq:OLSPoissVar}. This simple result is due to the linearity of the model, and the independence
of the measurements.

It is necessary to emphasize that the approximate variances derived
in \eqref{eq:OLSPoissVar} are an \emph{ad hoc} estimate, and they do
differ from the standard OLS error matrix~\footnote{See, e.g., Eq.~11.19 of
\cite{bonamente2022book} or Example 19.6 of \cite{kendall1973}.}
\begin{equation}
  \varepsilon_{\mathrm OLS} = \dfrac{\sigma^2}{\Delta}  \begin{bmatrix}
     \sum_{i=1}^N x_i^2 & - \sum_{i=1}^N x_i\\[10pt]
    \dots & N \\
  \end{bmatrix}
  \label{eq:OLSError}
\end{equation}
where $\sigma^2$ is the common (and unknown) variance. In the usual case of the OLS with equal variances,
one would normally proceed with the estimation of $\sigma^2$ via the sum of square residuals, divided by $N-k$,
\begin{equation}
  \hat{\sigma}^2 = \dfrac{1}{N-k} \sum_{i=1}^N (y_i - (\hat{a} + \hat{b}\, x_i))^2
  \label{eq:s2}
\end{equation}
where $k=2$ corresponds to the two adjustable parameters of the model. But such estimator would be biases
for the case of heteroskedastic data, and therefore not meaningful for this Poisson regression.
This is the reason for the development of the ad--hoc approximation \eqref{eq:OLSPoissVar}.

Moreover, an estimate for the covariance of the linear OLS estimators
can be obtained by the usual error propagation method, namely
\[ \hat{\sigma}^2_{ab} = \sum_{j=1}^N \dfrac{\partial \hat{a}}{\partial y_j} \dfrac{\partial \hat{b}}{\partial y_j} \sigma^2_j\]
where $\sigma^2_j$ is the variance of the $j$--th measurement, in this application assumed to be Poisson--distributed.
Notice that this method can also be used to obtain the error matrix \eqref{eq:OLSError}
(e.g., see Sect.~11.3 of \citealt{bonamente2022book}.)
The result is
\begin{equation}
    \hat{\sigma}^2_{ab}  = \small \dfrac{1}{\Delta^2} \left( \left(N \sum_{i=1}^N  x_i^2 + \left(\sum_{i=1}^N  x_i\right)^2 \right) \sum_{j=1}^N x_j \sigma^2_j
     - \sum_{i=1}^N x_i \left( \sum_{i=1}^N  x_i^2 \sum_{j=1}^N \sigma^2_j +
    N \sum_{j=1}^N x_j^2 \sigma^2_j \right) \right) \label{eq:OLSPoissCov}
\end{equation}
which becomes the same as in \eqref{eq:OLSError} under the assumption of homoskedasticity.
In \eqref{eq:OLSPoissCov}, however, the variances to be used are the non--uniform
Poisson--estimated $\sigma^2_j = \hat{a}+x_j \hat{b}$.
In summary, it is still possible to estimate a covariance matrix for the linear OLS,
in place of the maximum--likelihood and Poisson--based \eqref{eq:matrix3}. These estimates, \eqref{eq:OLSPoissVar} and \eqref{eq:OLSPoissCov},
are obtained by a \emph{post facto} use of the Poisson distribution in the usual OLS
regression estimators, which in general do not require the specification of a parent distribution.
The data of Fig.~\ref{fig:fit1} yield linear OLS parameters of $a=0.93\pm0.80$, $b=0.25\pm0.16$,
for a covariance $\hat{\sigma}^2_{ab}=-0.11$, using \eqref{eq:OLSPoissVar}
and \eqref{eq:OLSPoissCov}. The confidence band according to these errors
is  reported as the blue hatched area in Fig.~\ref{fig:fit1}.

For the purpose of comparison,
the parameter uncertainties for the standard OLS regression according to \eqref{eq:OLSError} and \eqref{eq:s2} are also calculated.
First, from the OLS best--fit parameters \eqref{eq:s2} yields an estimated common data variance of $\hat{\sigma}^2=1.78$, and accordingly the
OLS error matrix yields errors that result in $a=0.93\pm0.85$, $b=0.25\pm0.15$.
In this application, the errors calculated for this
`standard' OLS model is similar to those of the 'modified' OLS model ($a=0.93\pm0.80$, $b=0.25\pm0.16$), and also
to those with the `proper' Poisson regression of Sect.~\ref{sec:fitPoisson} ($\hat{a}=0.63\pm1.02$ and $\hat{\lambda}=0.53\pm0.66$,
with overall slope $\hat{a}\cdot \hat{\lambda}=0.33 \pm 0.15$).
This chance agreement is in part the result of the fact that
the Poisson data are consistent with the model, as determined by the analysis of the \cstat\ in Sect.~\ref{sec:cstatHypothesis}.
This agreement, however, is not guaranteed in every case, and the standard OLS regression for eteroskedastic data is not recommended
(see, e.g., the discussion in Sect.~3.1 of the textbook by \citealt{cameron2013}).

Finally, it is worth noticing that a weighted least--squares (WLS) linear regression, with weights
equal to the variances of the measurements~\footnote{The weighted least--squares regression
is discussed, for example, in Sect.~19.17 of \cite{kendall1973}.}, is not
recommended in this case. In fact, the WLS estimators depend on the unknown (and unequal) variances,
and therefore it is not possible to determine \emph{a priori} a best--fit WLS model from which then to
infer the Poisson variances for the error analysis, as was done in \eqref{eq:OLSPoissVar}.

\subsubsection{Goodness--of--fit analysis for OLS regression}
Another difference between the maximum--likelihood \cstat\ regression and  the
OLS regression is with regards to goodness--of--fit testing. When
the common variance is unavailable, as is generally the case for the OLS regression,
the test statistic often used is
\begin{equation} t = \dfrac{\hat{b}}{\hat{\sigma}^2_b} \sim t(N-2),
\label{eq:tWald}
\end{equation}
where
\begin{equation}
  \hat{\sigma}^2_b = \dfrac{(1-r^2)}{N-2} \cdot \dfrac{\sum_{i=1}^N(y_i - \overline{y})^2}{\sum_{i=1}^N(x_i - \overline{x})^2}
  \label{eq:VarbR}
\end{equation}
is the sample variance of the slope $b$, and $r$ is the linear correlation coefficient. Notice that this
is a non--parametric estimate of the variance of the slope $b$, in that no assumptions on the
distribution of the data are required.
The linear correlation
coefficient is defined by $r^2=b\, b'$, where $b$ is the usual slope of the linear regression of $Y$ given $X$, and
$b'$ is the slope of the linear regression of $X$ given $Y$ (see, e.g.,
Chapter 14 of \citealt{bonamente2022book}).
This variance differs from that in \eqref{eq:OLSPoissVar}, where the parent distribution
of the measurement is used. From
the observations that $a= \overline{y} - b\, \overline{x}$ and that $(1-r^2) \sum (y_i - \overline{y})^2 / (N-2)$ is the non--parametric
unbiased estimator of the sample variance of $y$,
\eqref{eq:VarbR} yields a non--parametric
estimate of the variance of the parameter $a$ of the regression as
\begin{equation}
  \hat{\sigma}^2_a = \hat{\sigma}^2_b \left( \dfrac{1}{N} \sum_{i=1}^N (x_i - \overline{x})^2 + \overline{x}^2\right),
    \label{eq:VaraR}
\end{equation}
(see, e.g., Sects. 11.5 and 14.3~of \citealt{bonamente2022book}).~\footnote{These
formulas are implemented in the \texttt{scipy.stats.linregress} software.}
The test statistic \eqref{eq:tWald} assumes that the parent slope is null ($b=0$),  it is distributed like a Student's $t$ distribution
with $N-2$ degrees of freedom, and itt is sometimes referred to as the \emph{Wald statistics}, after \cite{wald1943}.
For these data, the linear correlation coefficient is $r^2=0.273$ for a value of $t=1.73$, corresponding to a $p$--value of
$0.12$. Therefore, in the case of the OLS regression, the $t$ statistic provides an alternate non--parametric
means to test the null hypothesis that the
data are uncorrelated, or with $b=0$.

\begin{figure}[t]
  \centering
  \includegraphics[width=5in]{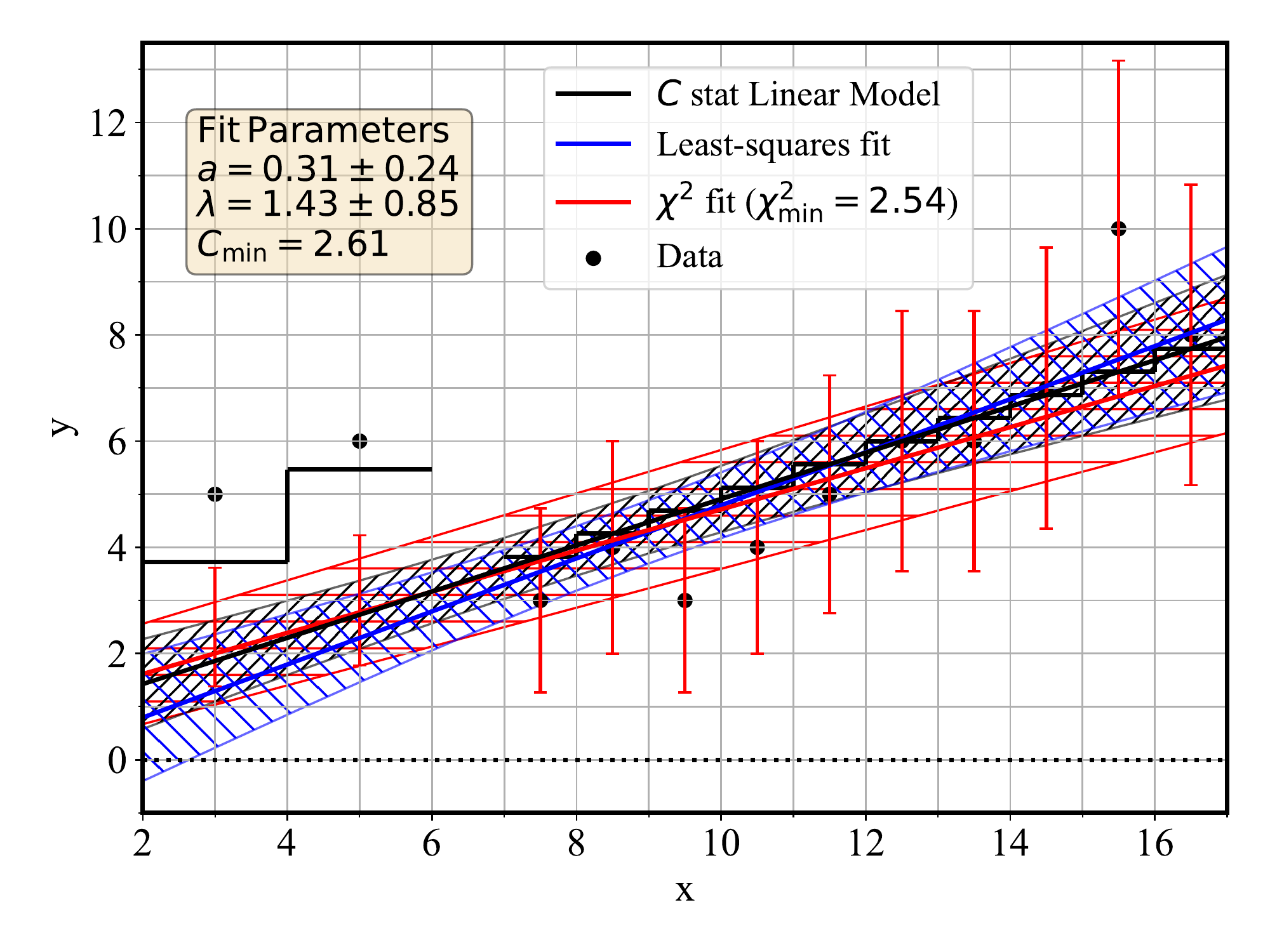}
  \caption{Linear regressions to  days 2--17 of the COVID-19 data of Table~\ref{tab:covid},
  with a gap in the data, and non--uniform binning. The black curves correspond to the
  linear regression with the Poisson--based \cstat, the blue curves for the ordinary least--squares regression, and the red curves to the
  $\chi^2$ regression that assumes heteroskedastic errors reported as red error bars.}
  \label{fig:fit2}
\end{figure}

\subsection{Fit to days 2--16 with a gap in the data, and non--uniform binning}
In this example, the data for days 0 and 1 were ignored, the data for days 2--3 and 4--5 were
combined into bins of size $\Delta x=2$~days, the data for day 6 (where no events were
reported) were ignored, and the datapoints for days 7--16 were kept with the
original binning of $\Delta x=1$~days, for a total
of
$N=12$ measurements. These choices were simply made for illustration purposes.
Specifically, it
is generally not advisable to \emph{ignore} a bin where no counts were recorded,
since the non--detection is actually positive information; a case with no counts--in--bin was
provided in the previous example. Moreover, the linear regression --- and any non--linear regression
as well ---  is sensitive to the choice of bin size. The choice of bin sizes must be dictated by an
understanding of the data at hand, and of the scientific goals of the regression.

\subsubsection{Poisson regression}
The results of the fit to these data are shown in Fig.~\ref{fig:fit2},
for best--fit values of $\hat{a}=0.31\pm0.24$, $\lambda=1.43\pm0.85$,
and an error matrix of
\[ \hat{\varepsilon}=\begin{bmatrix} 0.716 & -0.202 \\
     -0.202 & 0.060
  \end{bmatrix}.
\]

It is  illustrative to compare this estimate to the one from the error propagation method described in Sect.~\ref{sec:errorProp},
which yields an estimated error matrix of
\[ \hat{\varepsilon}_{ep} =
\begin{bmatrix} 0.687 & -0.200 \\
     -0.200 & 0.061
  \end{bmatrix},
\]
where the variance of the measurements $y_k$ were approximated with the
estimated parent means $\hat{\mu}_k$.
This covariance matrix  is nearly identical to the one based on the information
matrix, and the small
differences between $\hat{\varepsilon}$
and $\hat{\varepsilon}_{ep}$
provide an example of the agreement between the two estimates.
Fig.~\ref{fig:fit2} also shows the confidence band for the Poisson regression as the gray hatched area.

\subsubsection{Comparison with OLS and $\chi^2$ regression}
For comparison, a maximum--likelihood regression using the $\chi^2$ statistic
was also performed for this example, and the results are shown as the red line and confidence band.
The ordinary least--squares regression is also shown in blue. For the
method of regression that minimizes the $\chi^2$ statistic,
the data are assumed to be normally--distributed, with variances equal to the number
of counts in each bin. Given the variable bin size, the continuous lines therefore
represent a \emph{density}, i.e., the number of counts per unit interval of the
$x$ variable, which in this case it is time in units of days. This is reflected in the relationship
between the function \eqref{eq:scargle} in units of counts per unit $y$,
and the parent mean in \eqref{eq:mui} in units of pure counts,
which are related by the bin size $\Delta x_i$.
Accordingly, in the $\chi^2$ fit with non--uniform binning, the fit statistic is
\[ \chi^2 = \sum_{i=1}^N \left(\dfrac{y_i - \mu_i}{\hat{\sigma}_i}\right)^2
\]
where $\mu_i$ is given by \eqref{eq:mui} with a bin size $\Delta x_i=2$ for $i=1,2$, and $\Delta x_i=1$
for the other bins.
Equivalently, the $\chi^2$ regression can be performed
directly to the function \eqref{eq:scargle} (or to the linear
model with the standard parameterization $a+b\,x$) by re--scaling both the measurements
$y_i$ and the errors $\hat{\sigma}_i$ to counts per unit $y$, since this is
the usual assumption in standard software that is used for maximum--likelihood
regression for normal data.~\footnote{For example, \texttt{linregress} or \texttt{curve_fit}
in the \texttt{scipy} libraries. Such rescaling of units preserves the
statistical properties of the measurements.} Following this method, the first two data points
have a count rate of one half the values reported by the black dots, and the
$\pm \hat{\sigma}$ error bars for the $\chi^2$ regression are illustrated as red vertical lines;
those error bars have no meaning for the \cstat\ or the OLS regressions.

The overall slope of the \cstat\ regression line is $a \lambda=0.44\pm0.11$, while for the
$\chi^2$ regression it is $0.39\pm0.13$, and for the least--squares fit it is $0.50\pm0.08$.
The differences, albeit relatively small in this application, are the result of
the different assumptions used in the modelling of the data, namely the choice
of Poisson versus normal distribution for the number of counts in each bin.

\section{Discussion and conclusions}
\label{sec:conclusions}
This paper has reviewed the \cite{bonamente2022} maximum--likelihood method of linear regression
for Poisson count data, and has presented an analytical method for the calculation of the covariance matrix
for the parameters of a linear model.
For this method of regression, it is found that the parameterization
of the model of Equation~\ref{eq:scargle}, proposed by \cite{scargle2013},
is quite convenient due to its factorization
of the two parameters,
which results in convenient algebraic properties for the logarithm of the Poisson likelihood.

The method to obtain the covariance matrix
is based on the Fisher information matrix, whose inverse
approximates the covariance matrix in the asymptotic limit of a large number of
measurement. The key result of this paper is that it is possible to obtain
a simple analytical expression for the covariance matrix (Equation~\ref{eq:matrix3})
that can be used for the Poisson linear regression.
This new result, together with the
 maximum--likelihood estimates of the parameters presented in \cite{bonamente2022},
offers a simple and accurate method to perform the linear regression
on a variety of Poisson--distributed count data, including data with non--uniform binning and
with the presence of gaps in the coverage
of the independent variable. This method of regression for Poisson--distributed count
data is therefore available for use in a variety of data applications, such as the one
in the biomedical field presented in Sect.~\ref{sec:application}.

The paper has also shown that an alternative method
for the estimation of the covariance metrix, based on the error propagation or delta method,
also provides a simple analytical estimate that is in general good agreement with
the method based on the information matrix.
The main limitation in the use of the covariance matrix estimated from
the information matrix (as well as that from the error propagation method)
is that it applies only in the limit of a large
number of measurements, where the maximum--likelihood estimates of the parameters
are normally--distributed \citep[e.g.,][]{fisher1922b,cramer1946}.
In this limit, the variance of the parameters and the covariance can be
used for hypothesis testing assuming normal distribution of the parameters.

The maximum--likelihood linear regression to Poisson--distributed count data, presented in this paper
and in \cite{bonamente2022}, have the advantages of being \emph{unbiased} and
\emph{asymptotically efficient}, in the sense that, in the
asymptotic
limit of a large number of measurements, the estimators of the linear model parameters
have the minimum variance bound according to the
\emph{Cram\'{e}r--Rao} theorem (\citealt{rao1945}, \citealt{cramer1946}, see also
\citealt{rao1973} and \citealt{kendall1973}). Given that
maximum--likelihood estimators are also known to be \emph{consistent}
in general (see, e.g.,
Chapter~18 of \citealt{kendall1973}),
this method of regression has very desirable properties for the estimation of the linear model parameters.
On the other hand, the ordinary least--squares regression does retain the
convenient property of unbiasedness, even  for the type of heteroskedastic
variances that apply to these Poisson--distributed data. The OLS, however, is not guaranteed
to be efficient, in the sense that the variances are in general larger than those of the
maximum--likelihood method, and they can also be biased \citep[e.g.][]{kendall1973,swindel1968}.
As a result, the OLS should be viewed as a less accurate method for the regression of Poisson--distributed
count data.

Two other methods of linear regression also discussed in this paper, the weighted least--squares
and the $\chi^2$ methods, suffer from more fundamental shortcomings when
applied to integer--count Poisson data. In both cases, it is necessary to
know \emph{a priori} the unequal variances of the data, in order to proceed with the estimation, and this information
is simply not available. In particular, the $\chi^2$ method is often applied by making the assumption that
the variances are equal to the square root of the counts, \emph{de facto} assuming that
the data are normally distributed and with variances equal to the \emph{measured} counts. Although the
Poisson distribution does converge to a normal distribution in the limit of a large number of counts
(e.g., see Sect.~3 of \citealt{bonamente2022book}), the fact that the variance is approximated with the
measured counts leads to a bias that remains even in the case of large--count data \cite[e.g.][]{kelly2007,bonamente2020}.
It is therefore not appropriate to use either of these two methods for the regression of Poisson--distributed
count data.

\bibliographystyle{imsart-nameyear} 


\end{document}